\newtheorem{theorem}{Theorem}
\newtheorem{lemma}[theorem]{Lemma}
\theoremstyle{definition}
\newtheorem{definition}[theorem]{Definition}
\newtheorem{remark}[theorem]{Remark}
\newcommand{\abs}[1]{\lvert #1 \rvert}
\newcommand{\bigabs}[1]{\bigl\lvert #1 \bigr\rvert}
\newcommand{\Bigabs}[1]{\Bigl\lvert #1 \Bigr\rvert}
\newcommand{\ip}[2]{\langle #1 , #2\rangle}
\newcommand{\bigip}[2]{\bigl\langle #1, #2 \bigr\rangle}
\newcommand{\biggip}[2]{\biggl\langle #1, #2 \biggr\rangle}
\newcommand\complex{\mathbb{C}}
\newcommand\real{\mathbb{R}}
\newcommand{\tinyspace}{\mspace{1mu}}
\newcommand{\op}[1]{\operatorname{#1}}
\newcommand{\tr}{\operatorname{Tr}}
\renewcommand{\vec}{\operatorname{vec}}
\newcommand{\norm}[1]{\lVert\tinyspace #1 \tinyspace\rVert}
\newcommand{\bignorm}[1]{\bigl\lVert\tinyspace #1 \tinyspace\bigr\rVert}
\newcommand{\biggnorm}[1]{\biggl\lVert\tinyspace #1 \tinyspace\biggr\rVert}
\newcommand\I{\mathds{1}}
\newcommand{\setft}[1]{\mathrm{#1}}
\newcommand{\Unitary}{\setft{U}}
\newcommand{\MixedUnitary}{\setft{MU}}
\newcommand{\Herm}{\setft{Herm}}
\newcommand{\Lin}{\setft{L}}
\newcommand{\Channel}{\setft{C}}
\newcommand\A{\mathcal{A}}
\newcommand\B{\mathcal{B}}
\newcommand\V{\mathcal{V}}
\newcommand\K{\mathcal{K}}
\definecolor{White}{rgb}{1,1,1}
\definecolor{Black}{rgb}{0,0,0}
\definecolor{LightGray}{rgb}{.81,.81,.81}
\colorlet{ChannelColor}{LightGray}
\colorlet{ChannelTextColor}{Black}
\colorlet{ReadoutColor}{White}
\begin{document}

\title{Detecting mixed-unitary quantum channels is NP-hard}
\author{Colin Do-Yan Lee}
\author{John Watrous}

\affil{%
  Institute for Quantum Computing and School of Computer Science\protect\\
  University of Waterloo, Canada\vspace{2mm}}

\date{February 8, 2019}

\maketitle

\begin{abstract}
  A quantum channel is said to be a \emph{mixed-unitary} channel if it can be
  expressed as a convex combination of unitary channels.
  We prove that, given the Choi representation of a quantum channel $\Phi$,
  it is NP-hard with respect to polynomial-time Turing reductions to determine
  whether or not $\Phi$ is a mixed-unitary channel.
  This hardness result holds even under the assumption that $\Phi$ is not
  within an inverse-polynomial distance (in the dimension of the space upon
  which $\Phi$ acts) of the boundary of the mixed-unitary channels.
\end{abstract}

\section{Introduction}

In the theory of quantum information, \emph{quantum channels} represent
discrete-time changes in systems that can, in an idealized sense, be realized
by physical processes.
Mathematically speaking, quantum channels are represented by completely
positive and trace-preserving linear maps of the form
\mbox{$\Phi:\Lin(\complex^n) \rightarrow \Lin(\complex^m)$},
where $\Lin(\complex^n)$ is the set of linear maps, or \emph{operators}, from
$\complex^n$ to itself, and likewise for $\Lin(\complex^m)$.
If the state of a system is represented by a density operator
$\rho\in\Lin(\complex^n)$ prior to the action represented by the channel
$\Phi$, then its state after the channel acts is given by the density operator
$\Phi(\rho) \in \Lin(\complex^m)$.
This paper focuses on channels for which $n=m$, which represent the common
situation in which a discrete-time change preserves the size of a physical
system.
(The sizes of the input and output systems of a quantum channel are reflected
by the dimensions of the underlying spaces $\complex^n$ and $\complex^m$.)

\emph{Unitary channels} form one of the simplest categories of quantum
channels.
A unitary channel is a channel of the form
$\Phi:\Lin(\complex^n) \rightarrow \Lin(\complex^n)$
that is given by $\Phi(X) = U X U^{\ast}$ for every $X\in\Lin(\complex^n)$, for
some fixed choice of a unitary operator $U\in\Lin(\complex^n)$.
A \emph{mixed-unitary channel} is one that can be expressed as a convex
combination of unitary channels.
Equivalently, a channel $\Phi:\Lin(\complex^n) \rightarrow \Lin(\complex^n)$
is mixed-unitary if there exists a positive integer $N$, a probability vector
$(p_1,\ldots,p_N)$, and unitary operators $U_1,\ldots,U_N\in\Lin(\complex^n)$
such that
\begin{equation}
  \Phi(X) = \sum_{k=1}^N p_k U_k X U_k^{\ast}
\end{equation}
for every operator $X\in\Lin(\complex^n)$.
We let $\text{MU}(\complex^n)$ denote the set of all such channels.

Mixed-unitary channels are important in quantum information theory for a number
of reasons.
They provide a rich set of examples of channels, and exhibit many fundamental
attributes and properties of general quantum channels \cite{Rosgen08}.
At the same time, their relatively simple form can be beneficial in analyses,
as compared with general quantum channels.
Mixed-unitary channels arise naturally in both Hermitian operator formulations
of majorization \cite{AlbertiU82} and in a variety of cryptography situations
that concern the encryption of quantum states
\cite{AmbainisMTdW00,HaydenLSW04,AmbainisS04}.

Quantum channels can be represented in different ways, but one common
representation is the \emph{Choi representation} \cite{Choi75}.
The Choi representation of an arbitrary linear map
$\Phi:\Lin(\complex^n)\rightarrow\Lin(\complex^n)$ is defined as
\begin{equation}
  J(\Phi) = \sum_{1\leq i,j\leq n} \Phi(E_{i,j}) \otimes E_{i,j},
\end{equation}
where $E_{i,j}$ is the operator mapping the elementary unit vector $e_j$ to
$e_i$ and all vectors orthogonal to $e_i$ to $0$.
(Equivalently, with respect to the standard basis $\{e_1,\ldots,e_n\}$, the
operator $E_{i,j}$ is represented by the matrix having a 1 in entry $(i,j)$ and
all other entries 0.)

We prove that it is NP-hard, with respect to polynomial-time Turing reductions,
to determine whether or not a given quantum channel is mixed-unitary.
Specifically, we consider the problem in which the input is
the Choi representation $J(\Phi) \in \Lin(\complex^n \otimes \complex^n)$ of a
quantum channel $\Phi:\Lin(\complex^n)\rightarrow\Lin(\complex^n)$, along with
the unary representation $0^m$ of a positive integer $m$, and the task is to
determine whether or not $\Phi$ is a mixed-unitary channel under the promise
that $J(\Phi)$ is not within distance $1/m$ of the boundary of the
set of all Choi representations of mixed-unitary channels.
That is, the promise guarantees that the decision of whether or not $\Phi$ is
mixed-unitary is not ``artificially hard'' due to issues relating to
numerical precision.
Our proof establishes that this problem is, in fact, \emph{strongly NP-hard},
meaning that it remains NP-hard even when the real and imaginary parts of all
of the numbers appearing in the Choi representation of the input channel are
expressible as ratios of integers that are bounded in absolute value by a
polynomial in the length of the entire input.

The methodology behind our proof is reminiscent of known proofs of the
NP-hardness of testing if a given bipartite density operator
$\rho\in\Lin(\complex^n\otimes\complex^m)$ is \emph{separable}
\cite{Gurvits2003,Ioannou2007,Gharibian2010,Yu2016,Tura2018}, meaning that it can be
represented as a convex combination of \emph{product states}, which represent
independence between the two individual systems that define the bipartition in
question.
In particular, following the strong NP-hardness proof of separability testing
due to Gharibian \cite{Gharibian2010}, we make use of a theorem due to Liu
\cite{Liu2007} that establishes the existence of a polynomial-time Turing
reduction from the \emph{weak optimization} problem to the
\emph{weak membership} problem in certain families of convex sets.
We note that our main result can, in fact, be closely linked with the problem
of separability testing, in the sense that testing if a channel is
mixed-unitary may alternatively be formulated as a problem concerning the
expression of a bipartite density operator in a certain way.
More specifically, the set of all Choi representations of mixed-unitary
channels, when normalized, is equivalent to the set of bipartite quantum states
that can be written as convex combinations of \emph{maximally entangled}
states.

The remainder of this paper is organized as follows.
Section~\ref{sec:preliminaries} summarized preliminary material on
computational complexity and quantum information theory that is required to
understand this paper, and formally defines the \emph{mixed-unitary detection}
problem described above along with a different problem, called
\emph{unitary quadratic minimization}, that plays an important role in the
proof of our main result.
In Section~\ref{sec:reduction-from-3-coloring} we prove that the NP-complete
\emph{graph 3-coloring} problem (3COL) reduces to
\emph{unitary quadratic minimization} (UQM) through a polynomial-time mapping
reduction, and in Section~\ref{sec:reduction-to-MUD} we prove that
\emph{unitary quadratic minimization} reduces to \emph{mixed-unitary detection}
(MUD) through a polynomial-time Turing reduction.
In symbols, these two sections establish the relations
\begin{equation}
  \mathrm{3COL} \leq_m^p \mathrm{UQM} \leq_T^p \mathrm{MUD},
\end{equation}
which implies the NP-hardness of testing if a channel is
mixed-unitary.
The paper concludes with Section~\ref{sec:conclusion}, which mentions a few
open problems that relate to the main results of the paper.

\section{Preliminaries}
\label{sec:preliminaries}

The main purpose of this section is to clarify some of the notation and
conventions we use throughout the paper, and to define two decision problems:
one is the \emph{mixed-unitary detection} problem, whose hardness is the
primary focus of this paper, and the second is the
\emph{unitary quadratic minimization} problem, which serves as an intermediate
problem through which an NP-complete problem (the \emph{graph 3-coloring}
problem) is reduced to the \emph{mixed-unitary detection} problem.

\subsection{Computational complexity}

We assume the reader is familiar with basic notions of computational
complexity, such as polynomial-time mapping reductions, polynomial-time
Turing reductions, the concept of NP-completeness, and the fact that the
\emph{graph 3-coloring} problem is NP-complete.
This material is covered in several textbooks on computational complexity,
such as the book of Arora and Barak \cite{AroraB09}.
When we speak of \emph{polynomials} in this paper, we are referring only
to resource bounds---so it should be understood that we are referring more
precisely to nonzero univariate polynomials having non-negative integer
coefficients.

The decision problems we consider involve approximations of real number values
and/or guarantees on distances between real or complex vectors, and for this
reason they are naturally stated as \emph{promise problems}
\cite{EvenSY84}.
Formally speaking, a promise problem is a pair
$A = (A_{\text{yes}},A_{\text{no}})$ of disjoint sets 
$A_{\text{yes}},A_{\text{no}}\subseteq\Sigma^{\ast}$ of strings over an
alphabet $\Sigma$.
A hypothetical algorithm or protocol for $A$ is required to output ``yes''
(or 1) on input strings in $A_{\text{yes}}$ (which are called \emph{yes-inputs}
or \emph{yes-instances}) and output ``no'' (or 0) on input strings in
$A_{\text{no}}$ (which are called \emph{no-inputs} or \emph{no-instances}).
No constraints are placed on an algorithm or protocol for $A$ on strings
outside of the set $A_{\text{yes}}\cup A_{\text{no}}$.
In the promise problem statements found below and later in the paper, we
first list general assumptions on the form of the input, which is understood to
be a string encoding of one or more mathematical objects, followed by a
specification of which of these inputs are to be considered yes-instances and
which are to be considered no-instances.

In the problems considered in this paper, every complex number is assumed to
be encoded as a triple $(x,y,z)$ that represents the number $(x + i y)/z$,
where $x$ and $y$ are integers represented in signed binary notation and $z$ is
a positive integer represented in binary notation.
Real numbers are encoded similarly, but where the imaginary part represented by
$y$ is omitted.
One exception is when positive integers are explicitly stated to be represented
in unary notation, which means that each positive integer $m$ is encoded as the
string $0^m$.
Real or complex vectors and matrices are encoded as complete lists of their
real or complex number entries (as opposed to compact representations of sparse
matrices, for instance).

For a given polynomial $p$, we may say that an instance of any of the problems
discussed in this paper is \emph{$p$-bounded} if, for every real or complex
number appearing in that problem instance (and encoded as described above), the
values $x$, $y$, and $z$ are bounded in absolute value by $p(n)$, for $n$ being
the length of the entire instance being considered.
A polynomial-time mapping reduction $A\leq_m^p B$ between promise problems $A$
and $B$ will be called \emph{strong} if, for every polynomial $p$ there exists
a polynomial $q$ such that this property holds:
for every $p$-bounded instance of $A_{\text{yes}}$ or $A_{\text{no}}$,
the reduction produces a $q$-bounded instance of $B_{\text{yes}}$ or
$B_{\text{no}}$, respectively.
Along similar lines, a polynomial-time Turing reduction $A\leq_T^p B$ is
\emph{strong} if, for every polynomial $p$ there exists a polynomial $q$ such
that, on every $p$-bounded instance of $A_{\text{yes}}$ or $A_{\text{no}}$, the
reduction only queries $q$-bounded instances of $B$, and accepts or rejects
accordingly.
Finally, a problem is \emph{strongly NP-hard} (with respect to either
polynomial-time mapping or Turing reductions) if it remains NP-hard even under
the additional promise that every yes- or no-instance is $p$-bounded, for some
choice of a polynomial $p$.

\subsection{Linear algebra and quantum information}

Similar to computational complexity, we assume that the reader is familiar with
basic notions of linear algebra and quantum information.
There is, in fact, little in the way of quantum information theory that is
required for an understanding of this paper, aside from the definition of
quantum channels and their Choi representations (which are described in Chapter
2 of \cite{Watrous2018}, for instance).

For $n$ a positive integer, the vector space $\complex^n$ is defined in the
usual way, an inner product on this space is defined as
\begin{equation}
  \ip{u}{v} = \sum_{k=1}^n \overline{u(k)} v(k)
\end{equation}
(conjugate linear in the first argument), and the \emph{Euclidean norm} is
given by
\begin{equation}
  \norm{u} = \sqrt{\ip{u}{u}}.
\end{equation}
The \emph{standard basis} of $\complex^n$ is the basis
$\{e_1,\ldots,e_n\}$ of elementary unit vectors.

We write $\Lin(\complex^n)$ to denote the set of linear operators (or mappings)
from $\complex^n$ to itself, and associate this set with the set of all
$n\times n$ complex matrices, where the understanding is that the matrix
is a representation of the operator with respect to the standard basis.
We already introduced the notation $E_{i,j}$ in the previous section; the
operator $E_{i,j}$ is the operator whose matrix representation has a 1 in entry
$(i,j)$ and 0 in all other entries.
The inner product of two operators $A,B\in\Lin(\complex^n)$ is defined as
$\ip{A}{B} = \tr(A^{\ast} B)$,
where $A^{\ast}$ is the \emph{adjoint} of $A$ (which, in terms of matrix
representations, is equivalent to the \emph{conjugate transpose} of $A$).
An operator $A\in\Lin(\complex^n)$ is \emph{Hermitian} if $A = A^{\ast}$, and
is \emph{unitary} if $A^{\ast} A = \I_n$, where $\I_n \in \Lin(\complex^n)$ is
the identity operator acting on $\complex^n$.
The notations $\Herm(\complex^n)$ and $\Unitary(\complex^n)$ refer to the sets
of all Hermitian and unitary operators in $\Lin(\complex^n)$, respectively.

We refer to three different norms of operators.
The \emph{spectral norm} of $A$ is defined as
\begin{equation}
  \norm{A} = \max\bigl\{\norm{Au}\,:\,u\in\complex^n,\;\norm{u} \leq 1\bigr\},
\end{equation}
the \emph{2-norm} (or \emph{Frobenius norm}) of $A$ is defined as
\begin{equation}
  \norm{A}_2 = \sqrt{\ip{A}{A}},
\end{equation}
and the \emph{trace norm} is defined as
\begin{equation}
  \norm{A}_1 = \tr \Bigl( \sqrt{A^{\ast} A} \Bigr),
\end{equation}
where $\sqrt{A^{\ast} A}$ is the unique positive semidefinite operator whose
square is $A^{\ast} A$.
These norms satisfy $\norm{A}\leq\norm{A}_2\leq \norm{A}_1$ for every
$A\in\Lin(\complex^n)$.

The notion of quantum channels was also already introduced in the previous
section.
For the purposes of this paper, it suffices to note that the property of
\emph{complete positivity} of a linear map
$\Phi:\Lin(\complex^n)\rightarrow\Lin(\complex^n)$ is equivalent to its Choi
representation
\begin{equation}
  J(\Phi) = \sum_{1\leq i,j\leq n} \Phi(E_{i,j}) \otimes E_{i,j}
\end{equation}
being positive semidefinite.
Similarly, the property that $\Phi$ is \emph{Hermitian-preserving}
(which means that $\Phi(X) \in \Herm(\complex^n)$ for every
$X\in\Herm(\complex^n)$)
is equivalent to $J(\Phi)$ being Hermitian.
The property that $\Phi$ \emph{preserves trace} is equivalent to
\begin{equation}
  \bigl(\textup{Tr}\otimes\I_{\Lin(\complex^n)}\bigr)(J(\Phi)) = \I_n,
\end{equation}
and the property that $\Phi$ is \emph{unital} (which means that
$\Phi(\I_n) = \I_n$) is equivalent to 
\begin{equation}
  \bigl(\I_{\Lin(\complex^n)}\otimes\textup{Tr}\bigr)(J(\Phi)) = \I_n,
\end{equation}
where $\I_{\Lin(\complex^n)}$ refers to the identity mapping from
$\Lin(\complex^n)$ to itself.

\subsection{Problem statements}

Finally, we formally define the decision problems (stated as promise problems)
that were referred to in the introduction.
\pagebreak

\begin{definition}
  The \emph{unitary quadratic minimization} (UQM) promise problem is as
  follows.\vspace{2mm}

  \noindent
  \begin{longtable}{@{}lp{5.6in}@{}}
    Input: & Operators $A_1,\ldots,A_k \in \Lin(\complex^n)$ with
    $\norm{A_j}_2 \leq 1$ for each $j\in\{1,\ldots,k\}$, a real number
    $\alpha$, and the unary representation $0^m$ of a positive
    integer~$m$.\\[2mm]
    Yes: & There exists a unitary operator $U \in \Unitary(\complex^n)$ such
    that
    \begin{equation}
      \sum_{j = 1}^k \abs{\ip{A_j}{U}}^2 \leq \alpha.
    \end{equation}\\
    No: & For every unitary operator $U \in \Unitary(\complex^n)$ it is the case
    that
    \begin{equation}
      \sum_{j = 1}^k \abs{\ip{A_j}{U}}^2 \geq \alpha + \frac{1}{m}.
    \end{equation}
  \end{longtable}
\end{definition}

\begin{definition}
  \label{definition:mixed-unitary detection}
  The \emph{mixed-unitary detection} (MUD) promise problem is as
  follows.\vspace{2mm}

  \noindent
  \begin{longtable}{@{}lp{5.6in}@{}}
    Input: & The Choi representation
    $J(\Phi)\in\Herm(\complex^n\otimes\complex^n)$ of a trace-preserving,
    unital, and Hermitian-preserving map
    $\Phi:\Lin(\complex^n)\rightarrow\Lin(\complex^n)$ along with the unary
    representation $0^m$ of a positive integer~$m$.\\[2mm]
    Yes: & Every trace-preserving, unital, Hermitian-preserving
    map $\Psi:\Lin(\complex^n)\rightarrow \Lin(\complex^n)$ that satisfies
    \begin{equation}
      \bignorm{J(\Psi) - J(\Phi)}_2 \leq \frac{1}{m}
    \end{equation}
    is a mixed-unitary channel.\\[2mm]
    No: & Every trace-preserving, unital, Hermitian-preserving
    map $\Psi:\Lin(\complex^n)\rightarrow \Lin(\complex^n)$ that satisfies
    \begin{equation}
      \bignorm{J(\Psi) - J(\Phi)}_2 \leq \frac{1}{m}
    \end{equation}
    it not a mixed-unitary channel.
  \end{longtable}
\end{definition}

Two brief remarks about the previous two definitions are in order.
First, the fact that both problems expect a \emph{unary} representation of a
positive integer $m$ (as opposed to a binary representation, say) is a
standard mechanism in computational complexity that forces the value of $m$ to
be polynomially related to its input size.
If the integer $m$ were instead to be input in binary notation, the value
of $m$ would be exponential in its input length, and as a result the problems
themselves would become harder in a computational sense.
As our main result is a hardness result, it is therefore a stronger result
given the assumption that $m$ is input in unary notation.

Second, we note that in Definition~\ref{definition:mixed-unitary detection}
specifically, the maps $\Phi$ and $\Psi$ are assumed to range over all
trace-preserving, unital, and Hermitian-preserving maps rather than over all
unital channels.
We have defined the mixed-unitary detection problem in this way so that
it has a form that is standard within the study of computational geometric
problems \cite{GroetschelLS1988}:
the input is a vector in a real vector space, and the task is to determine
if a ball of a certain radius either lies entirely within or is disjoint from a
set of interest.
The computational difficulty of the problem would not change in a fundamental
way if $\Phi$ were assumed to be a channel, given that this property can be
tested efficiently.
We do, however, require that $\Psi$ ranges over all trace-preserving, unital,
and Hermitian-preserving maps in the yes case, so as to properly reflect the
property that every map in a ball of radius $1/m$ around $\Phi$ is a
mixed-unitary channel.

\section{Reduction from graph 3-coloring to unitary quadratic optimization}
\label{sec:reduction-from-3-coloring}
 
In this section we prove that the \emph{unitary quadratic minimization}
problem is NP-hard, via a polynomial-time mapping reduction from the
\emph{graph 3-coloring} problem.
Our reduction establishes that this problem is, in fact, strongly NP-hard, as
the operators $A_1,\ldots,A_k$ that are produced by our reduction from any
instance of \emph{graph 3-coloring} have entries restricted to the set
$\{0,1/2,1\}$.

\subsection{The reduction}

Let $G = (V,E)$ be a graph with $n$ vertices $V = \{1,\ldots,n\}$ and
$m$ edges
\begin{equation}
  E = \bigl\{\{a_1,b_1\},\ldots,\{a_m,b_m\} \bigr\},
\end{equation}
where $1 \leq a_j < b_j \leq n$ for every $j \in \{1,\ldots,m\}$.
Let $N = n + m$ and consider the following two collections of operators
drawn from $\Lin(\complex^N)$:
\begin{enumerate}
\item[1.]
  $E_{i,j}$ for every choice of $i,j\in\{1,\ldots,N\}$ with $i\not=j$.
\item[2.]
  $\bigl(E_{a_j,a_j} + E_{b_j,b_j} + E_{n+j,n+j}\bigr)/2$ for every
  $j\in\{1,\ldots,m\}$.
\end{enumerate}
(The factor of $1/2$ in the second type of operator guarantees that
each of the operators produced by the reduction has 2-norm at most 1.)
The total number of operators in these two collections is
\begin{equation}
  k = (N^2 - N) + m = (n+m)^2 - n.
\end{equation}
Let $A_1,\ldots,A_k$ denote these operators taken in any reasonable ordering
that allows for the computation of these operators in polynomial time given the
graph $G$.
The instance of $\text{UQM}$ produced by the reduction is
\begin{equation}
  \Bigl( A_1,\ldots,A_k,0,0^{526n^2} \Bigr)
\end{equation}

\subsection{Analysis: yes-instances map to yes-instances}

Assume first that $G$ is 3-colorable, so that there exists a function
$\varphi:\{1,\ldots,n\}\rightarrow\{0,1,2\}$ with the property that
$\varphi(a)\not=\varphi(b)$ whenever $\{a,b\}\in E$.
One may obtain a unitary operator $U\in\Unitary(\complex^N)$ such that
\begin{equation}
  \label{eq:objective-value-zero}
  \sum_{j = 1}^k \abs{\ip{A_j}{U}}^2 = 0
\end{equation}
by taking $U$ to be the diagonal operator whose diagonal entries are
third-roots of unity as follows (assuming $\omega = \exp(2\pi i/3)$):
\begin{enumerate}
\item[1.]
  For each $a\in\{1,\ldots,n\}$, let $U(a,a) = \omega^{\varphi(a)}$.
\item[2.]
  For each $j \in \{1,\ldots,m\}$, let $U(n+j,n+j) = \omega^{c_j}$
  for $c_j \in \{0,1,2\}$ being the unique color such that
  $c_j\not\in\{\varphi(a_j),\varphi(b_j)\}$.
\end{enumerate}
As $U$ is diagonal, it is the case that $\bigip{E_{i,j}}{U} = 0$ whenever
$i\not=j$.
For the $j$-th edge $\{a_j,b_j\}$, it is the case that
\begin{equation}
  \bigip{\bigl(E_{a_j,a_j} + E_{b_j,b_j} + E_{n+j,n+j}\bigr)/2}{U}
  = \frac{1}{2}\bigl(U(a_j,a_j) + U(b_j,b_j) + U(n+j,n+j)\bigr),
\end{equation}
which is zero because it is proportional to the sum of the three roots of unity
$1 = \omega^0$, $\omega^1$, and $\omega^2$.

\subsection{Analysis: no-instances map to no-instances}

It remains to prove that if $G$ is not 3-colorable, then for every unitary
operator $U$ one has
\begin{equation}
  \label{eq:objective-value-big}
  \sum_{j = 1}^k \abs{\ip{A_j}{U}}^2 \geq \frac{1}{526 n^2}.
\end{equation}
This statement will be proved in the contrapositive form.
To this end, assume hereafter that $U$ is a unitary operator, and for the
operators $A_1,\ldots,A_k$ produced from a given graph $G$ by the reduction
described previously it is the case that
\begin{equation}
  \label{eq:objective-value-not-big}
  \sum_{j = 1}^k \abs{\ip{A_j}{U}}^2 < \eta = \frac{1}{526 n^2}.
\end{equation}
From this assumption we will recover a 3-coloring of the graph $G$.
We will make use of the following lemma, which is proved at the end of the
present subsection, to do this.

\begin{lemma}
  \label{lemma:angle-bound}
  Suppose $\varepsilon \in [0,1/6]$ and $\alpha,\beta,\gamma\in\complex$
  satisfy the following conditions:
  \begin{enumerate}
  \item[1.]
    $\abs{\alpha}, \abs{\beta}, \abs{\gamma}\in[1 - \varepsilon, 1]$.
  \item[2.]
    $\abs{\alpha + \beta + \gamma} \leq \varepsilon$.
  \end{enumerate}
  For each angle
  $\theta \in \{\arg(\alpha) - \arg(\beta),\,
  \arg(\beta) - \arg(\gamma),\,\arg(\gamma) - \arg(\alpha)\}$,
  interpreted as an element of the set $[0,2\pi)$, it is the case that
  \begin{equation}
    \Bigabs{\theta-\frac{2\pi}{3}} \leq 6\varepsilon
    \quad\text{or}\quad
    \Bigabs{\theta-\frac{4\pi}{3}} \leq 6\varepsilon.
  \end{equation}
\end{lemma}

We begin by observing that the diagonal entries of $U$ must be close to 1 in
absolute value.
Specifically, for every $j\in\{1,\ldots,N\}$ it is the case that
\begin{equation}
  \sum_{i\not=j} \abs{U(i,j)}^2
  = \sum_{i\not=j} \abs{\ip{E_{i,j}}{U}}^2
  \leq \sum_{i = 1}^k \abs{\ip{A_i}{U}}^2 < \eta,
\end{equation}
and therefore
\begin{equation}
  \abs{U(j,j)} > \sqrt{1 - \eta} \geq 1 - \sqrt{\eta},
\end{equation}
as every column of $U$ has unit norm.
Next, observe that
\begin{equation}
  \begin{multlined}
    \frac{1}{2}\abs{U(a_j,a_j) + U(b_j,b_j) + U(n+j,n+j)}\\
    = \frac{1}{2}\bigabs{\bigip{E_{a_j,a_j} + E_{b_j,b_j} + E_{n+j,n+j}}{U}}
    \leq
    \Biggl(\sum_{i = 1}^k \abs{\ip{A_i}{U}}^2\Biggr)^{\frac{1}{2}}
    \leq \sqrt{\eta}
  \end{multlined}
\end{equation}
for every $j \in \{1,\ldots,m\}$.
By Lemma~\ref{lemma:angle-bound}, we conclude that for any two adjacent
vertices $a,b\in\{1,\ldots,n\}$ of $G$, the angle
\begin{equation}
  \theta_{a,b} = \arg(U(a,a)) - \arg(U(b,b))
\end{equation}
satisfies
\begin{equation}
  \Bigabs{\theta_{a,b}-\frac{2\pi}{3}} \leq 12\sqrt{\eta}
  \quad\text{or}\quad
  \Bigabs{\theta_{a,b}-\frac{4\pi}{3}} \leq 12\sqrt{\eta}.
\end{equation}

Define sets $S_0$, $S_1$, and $S_2$ as
\begin{equation}
  S_0 = \biggl[\frac{11\pi}{6}, 2\pi\biggr) \cup
    \biggl[0,\frac{\pi}{6}\biggr],
    \quad
    S_1 = \biggl[\frac{\pi}{2},\frac{5\pi}{6}\biggr],
    \quad\text{and}\quad
    S_2 = \biggl[\frac{7\pi}{6},\frac{3\pi}{2}\biggr].
\end{equation}
Because
\begin{equation}
  12n\sqrt{\eta} < \frac{\pi}{6},
\end{equation}
it follows from an iterative application of the argument above that,
for any two \emph{connected} (but not necessarily adjacent) vertices
$a,b\in\{1,\ldots,n\}$ of $G$, exactly one of the following three inclusions
holds:
\begin{equation}
  \label{eq:angle-alternatives}
  \begin{aligned}
    \theta_{a,b} & \in \bigl[2\pi - 12 n\sqrt{\eta}, 2\pi\bigr)
      \cup \bigl[0, 12 n\sqrt{\eta}\bigr] \subseteq S_0,\\
    \theta_{a,b} & \in \biggl[\frac{2\pi}{3} - 12 n\sqrt{\eta},
      \frac{2\pi}{3} + 12 n\sqrt{\eta}\biggr] \subseteq S_1,\\
    \theta_{a,b} & \in \biggl[\frac{4\pi}{3} - 12 n\sqrt{\eta},
      \frac{4\pi}{3} + 12 n\sqrt{\eta}\biggr] \subseteq S_2.
  \end{aligned}
\end{equation}
A 3-coloring of $G$ may therefore be obtained by repeating the following
procedure for each connected component $H$ of $G$:
\begin{enumerate}
\item[1.] Choose an arbitrary vertex $a\in\{1,\ldots,n\}$ of $H$ (or, for
  concreteness, the lowest-numbered vertex of $H$), and assign this vertex the
  color 0 (i.e., set $\varphi(a) = 0$).
\item[2.] For each vertex $b\in\{1,\ldots,n\}$ of $H$, assign $b$ a color as
  follows:
  \begin{enumerate}
  \item[(a)] If $\theta_{a,b} \in S_0$, then set $\varphi(b) = 0$.
  \item[(b)] If $\theta_{a,b} \in S_1$, then set $\varphi(b) = 1$.
  \item[(c)] If $\theta_{a,b} \in S_2$, then set $\varphi(b) = 2$.
  \end{enumerate}
\end{enumerate}
As the angle $\theta_{b,c} = \arg(U(b,b)) - \arg(U(c,c))$ must exceed the width
$\pi/3$ of each set $S_0$, $S_1$, and $S_2$, adjacent vertices cannot be
assigned the same color, and therefore $\varphi$ is a valid 3-coloring of $G$.
This completes the proof of the reduction from 3-coloring to $\text{UQM}$,
aside from the proof of Lemma~\ref{lemma:angle-bound}, which follows.

\begin{proof}[Proof of Lemma~\ref{lemma:angle-bound}]
  It suffices to prove the lemma for $\theta = \arg(\alpha) - \arg(\beta)$, as
  the bound follows for the other two angles by symmetry.

  Observe first that the triangle inequality implies
  \begin{equation}
    \bigabs{ \abs{\alpha+\beta} - \abs{\gamma}} \leq \varepsilon,
  \end{equation}
  and therefore
  \begin{equation}
    1 - 2\varepsilon \leq \abs{\alpha + \beta} \leq 1 + \varepsilon.
  \end{equation}
  It is the case that
  \begin{equation}
    \abs{\alpha + \beta}^2 = \abs{\alpha}^2 + \abs{\beta}^2 + 2 \abs{\alpha}
    \abs{\beta} \cos(\theta),
  \end{equation}
  and therefore
  \begin{equation}
    \cos(\theta) = \frac{\abs{\alpha+\beta}^2
      -\abs{\alpha}^2 -\abs{\beta}^2}{2\abs{\alpha}\abs{\beta}}.
  \end{equation}
  A lower bound on $\cos(\theta)$ may be obtained as follows:
  \begin{equation}
    \label{eq:cosine-lower-bound}
    \cos(\theta)
    = \frac{\abs{\alpha+\beta}^2}{2\abs{\alpha}\abs{\beta}}
    - \frac{1}{2}\biggl(
    \frac{\abs{\alpha}}{\abs{\beta}}+\frac{\abs{\beta}}{\abs{\alpha}}
    \biggr)
    \geq \frac{(1 - 2\varepsilon)^2}{2} - \frac{2+\varepsilon}{2}
    \geq -\frac{1}{2} - \frac{5\varepsilon}{2}.
  \end{equation}
  Here we have used the observation that
  $\abs{\alpha},\abs{\beta}\in[1-\varepsilon,1]$ implies
  \begin{equation}
    \frac{\abs{\alpha}}{\abs{\beta}}+\frac{\abs{\beta}}{\abs{\alpha}}
    \leq (1 - \varepsilon) + \frac{1}{1-\varepsilon} \leq 2 + \varepsilon.
  \end{equation}
  An upper bound on $\cos(\theta)$ is given by
  \begin{equation}
    \label{eq:cosine-upper-bound}
    \cos(\theta)
    = \frac{\abs{\alpha+\beta}^2 - \abs{\alpha}^2 - \abs{\beta}^2}{
      2\abs{\alpha}\abs{\beta}}
    \leq \frac{(1 + \varepsilon)^2 - 2(1 - \varepsilon)^2}{2}
    \leq -\frac{1}{2} + 3\varepsilon.
  \end{equation}
  (Note that the numerator
  $\abs{\alpha+\beta}^2 - \abs{\alpha}^2 - \abs{\beta}^2$ of the first
  fraction in \eqref{eq:cosine-upper-bound} is necessarily non-positive, which
  explains why the denominator of the second fraction is $2$ and not
  $2(1-\varepsilon)^2$.)
  
  Now, because $\cos(\theta)$ is non-positive, it cannot be that
  $\theta \in [0,\pi/2) \cup (3\pi/2,2\pi)$. 
  It therefore suffices to consider the case that $\theta\in[\pi/2,3\pi/2]$.
  We will split this case into two sub-cases, $\theta\in[\pi/2,\pi]$ and
  $\theta\in[\pi,3\pi/2]$, which can be handled by symmetric arguments.
  
  With this in mind, suppose that $\theta\in[\pi/2,\pi]$, and observe that the
  cosine function is convex on the interval $[\pi/2,\pi]$.
  On this interval, the graph of the cosine function therefore lies above the
  tangent line passing through the point $(2\pi/3,-1/2)$, which implies
  \begin{equation}
    \cos(\theta) \geq -\frac{1}{2} + \frac{\sqrt{3}}{2}\biggl(\frac{2\pi}{3} -
    \theta\biggr).
  \end{equation}
  Combining this inequality with \eqref{eq:cosine-upper-bound} yields
  \begin{equation}
    \frac{2\pi}{3} - \theta \leq \frac{6\varepsilon}{\sqrt{3}} \leq 4
    \varepsilon.
  \end{equation}
  Again using convexity, the graph of the cosine function on the interval
  $[2\pi/3,\pi]$ lies below the line segment whose endpoints are
  $(2\pi/3,-1/2)$ and $(\pi,-1)$.
  If $\theta\in [2\pi/3,\pi]$, then it follows that
  \begin{equation}
    \cos(\theta) \leq -\frac{1}{2} -
    \frac{3}{2\pi} \Bigl(\theta - \frac{2\pi}{3}\Bigr),
  \end{equation}
  and therefore by \eqref{eq:cosine-lower-bound} we have
  \begin{equation}
    \theta - \frac{2\pi}{3} \leq \frac{5\pi\varepsilon}{3} \leq 6 \varepsilon.
  \end{equation}
  The same bound is, of course, trivial when $\theta\in[\pi/2, 2\pi/3]$.
  It is therefore the case that
  \begin{equation}
    \Bigabs{\theta-\frac{2\pi}{3}} \leq 6\varepsilon.
  \end{equation}

  A similar argument implies that if $\theta\in[\pi,3\pi/2]$, then
  \begin{equation}
    \Bigabs{\theta-\frac{4\pi}{3}} \leq 6\varepsilon.
  \end{equation}
  which completes the proof.
\end{proof}

\section{Reduction from unitary quadratic optimization to mixed-unitary
  detection}
\label{sec:reduction-to-MUD}

In this section we prove that there exists a polynomial-time Turing reduction
from the \emph{unitary quadratic minimization} problem to the
\emph{mixed-unitary detection} problem:
\begin{equation}
  \label{eq:UQM-to-MUD}
  \text{UQM} \leq^p_T \text{MUD}.
\end{equation}
At the heart of this reduction is a general result due to Liu \cite{Liu2007}
that establishes that there exists a polynomial-time Turing reduction from the
\emph{weak optimization} problem to the \emph{weak membership} problem for
certain convex sets and problem parameterizations.
These problems and Liu's reduction (but not the specifics of the reduction
itself or the proof that it is correct) are described in the first subsection
that follows, and the subsequent subsections connect these problems and Liu's
reduction to the reduction \eqref{eq:UQM-to-MUD}.

\subsection{Weak optimization, weak membership, and Liu's reduction}

In order to define the \emph{weak optimization} and \emph{weak membership}
problems, and to explain Liu's reduction from \emph{weak optimization} to
\emph{weak membership}, a couple of definitions will be required.
The first definition simply establishes some convenient notation.
\begin{definition}
  Let $N$ be a positive integer and let $\delta\geq 0$ be a non-negative real
  number.
  For every vector $x\in\real^N$, the (closed) ball of radius $\delta$ around
  $x$ is defined as
  \begin{equation}
    \B_N(x,\delta) = \bigl\{y\in\real^N\,:\,\norm{y-x}\leq \delta\bigr\},
  \end{equation}
  and for every set $\A\subset\real^N$, one defines
  \begin{equation}
    \B_N(\A,\delta) = \bigcup_{x\in\A}\B_N(x,\delta).
  \end{equation}
\end{definition}

The second definition is one for a \emph{polynomially bounded} sequence of
convex sets.
Intuitively speaking, one should view the sets in such a sequence as
corresponding in some way to possible input lengths in a computational
problem.
The term \emph{polynomially bounded} refers to both a polynomial upper-bound on
the norm of every element in each set and to an inverse polynomial lower bound
on the size of a ball around 0 that is fully contained within each set.

\begin{definition}
  Let $\K_N\subset\real^N$ be a compact, convex set for each positive integer
  $N$.
  The collection $\{\K_1,\K_2,\ldots\}$ is \emph{polynomially bounded} if there
  exists a polynomial $p$ with the property that
  \begin{equation}
    \B_N(0,1/p(N)) \subseteq \K_N
    \quad\text{and}\quad
    \K_N \subseteq \B_N(0,p(N))
  \end{equation}
  for every positive integer $N$.
\end{definition}

\begin{remark}
  It is common that a somewhat more general definition is used in place of the
  one just given, where the smaller ball that is contained in each $\K_N$ need
  not be centered around 0, and where it is only the ratio of the radii of the
  two balls that needs to be polynomially bounded---but because the simpler
  definition above is sufficient for our needs, we adopt it rather than the
  more general definition.
\end{remark}

We are now ready to define the \emph{weak optimization} and
\emph{weak membership} problems, which are variants of standard problems
in the analysis of geometric algorithms \cite{GroetschelLS1988}.
Both are defined with respect to a collection $\{\K_1,\K_2,\ldots\}$ of
compact, convex sets of the sort considered in the previous definition.
(The problem definitions themselves do not require these collections to be
polynomially bounded, but Liu's result will require this assumption.)

\begin{definition}[Weak membership and weak optimization]
  Let $\K_N\subset\real^N$ be a compact, convex set for each positive integer
  $N$ and let $\K = \{\K_1,\K_2,\ldots\}$.
  \begin{enumerate}
  \item[1.]
    The \emph{weak membership} promise problem $\text{WMEM}(\K)$ for $\K$
    is as follows:\vspace{-2mm}
    
    \noindent
    \begin{longtable}{@{}lp{5.2in}@{}}
      Input: & A vector $x\in\real^N$ and the unary representation $0^m$ of a
      positive integer $m$.\\[1mm]
      Yes: & $\B_N(x,1/m) \subseteq \K_N$.\\[1mm]
      No: & $\B_N(x,1/m) \cap \K_N = \varnothing$.
    \end{longtable}
    \vspace{-3mm}
  \item[2.]
    The \emph{weak optimization} promise problem $\text{WOPT}(\K)$ for $\K$
    is as follows:\vspace{-2mm}
    
    \noindent
    \begin{longtable}{@{}lp{5.2in}@{}}
      Input: & A vector $u\in\real^N$ with $\norm{u} \leq 1$, a real number
      $\beta$, and the unary representation $0^m$ of a positive
      integer~$m$.\\[1mm]
      Yes: & There exists a vector $x\in\real^N$ such that
      $\B_N(x,1/m) \subseteq \K_N$ and $\ip{u}{x} \leq \beta$.\\[1mm]
      No: & For every vector $x\in\B_N(\K_N,1/m)$ it is the case that
      $\ip{u}{x} \geq \beta + 1/m$.
    \end{longtable}
  \end{enumerate}
\end{definition}

These problems are referred to as \emph{weak} versions of membership testing
and optimization because the promises effectively make the problems easier than
they might otherwise be.
That is, in the case of weak membership testing, any point within a distance
$1/m$ of the boundary of the corresponding convex set is viewed as a ``don't
care'' input, as is any input to the \emph{weak optimization} problem for which
the objective value $\beta$ is (informally speaking) neither
``easily achievable'' or ``far from achievable.''
In contrast, \emph{strong} variants of these problems, in which the ``don't
care'' inputs just described are valid inputs, are often extremely hard for
reasons that are more closely connected with limitations of finite-precision
real number computations than with the structure of the convex sets being
considered.
Variants of these problems in which $m$ is input in binary rather than unary
notation are also commonly referred to as \emph{weak membership testing} and
\emph{weak optimization}, but the inverse polynomial bound obtained by taking
$m$ to be input in unary is an essential feature of Liu's result and is
required for our purposes.

Finally, we may now state the result due to Liu that forms the heart of
the reduction from \emph{unitary quadratic minimization} to
\emph{mixed-unitary detection}.

\begin{theorem}[Liu]
  \label{theorem:Liu}
  For every polynomially bounded collection $\K = \{\K_1,\K_2,\ldots\}$
  of compact, convex sets, it is the case that
  \begin{equation}
    \label{eq:Liu-reduction}
    \textup{WOPT}(\K) \leq^p_T \textup{WMEM}(\K).
  \end{equation}
  Moreover, there exists a strong polynomial-time Turing reduction that
  establishes this relation.
\end{theorem}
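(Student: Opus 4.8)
The plan is to prove this via the \emph{ellipsoid method}, following the standard framework for the polynomial-time equivalence of membership, separation, and optimization over well-bounded convex bodies. The polynomial boundedness hypothesis is used in two essential ways. The outer containment $\K_N \subseteq \B_N(0,p(N))$ supplies an initial ellipsoid that is guaranteed to contain $\K_N$, while the inner containment $\B_N(0,1/p(N)) \subseteq \K_N$ guarantees that each $\K_N$ is full-dimensional with volume bounded below by a quantity whose reciprocal logarithm is polynomial in $N$. Since the ratio of the outer and inner radii is at most $p(N)^2$, its logarithm is polynomially bounded, and this is precisely what keeps the number of ellipsoid iterations, and hence the entire procedure, polynomial.

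The crux, and the step I expect to be the main obstacle, is to convert the given weak membership oracle into a \emph{weak separation oracle}: a procedure that, on a vector $y$ and an accuracy parameter, either asserts that $y$ lies approximately in $\K_N$ or returns a unit vector $c$ realizing an approximate separating hyperplane valid for all $x$ lying deep inside $\K_N$. A membership oracle by itself only reports inside/outside with a fuzzy boundary, so extracting a direction requires geometry rather than a single query. The idea is to exploit the inner ball $\B_N(0,1/p(N)) \subseteq \K_N$: to separate a given $y$, one first locates a boundary point by binary search along the segment from $0$ to $y$, each step being a membership query, and then estimates the outward normal at that boundary point by probing nearby points with further membership queries. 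Convexity together with full-dimensionality ensures that the resulting hyperplane genuinely separates $y$ from a slightly shrunken copy of $\K_N$, with a margin controlled by the inner radius; the ``don't care'' region of the membership oracle is absorbed by choosing every accuracy parameter polynomially smaller than $1/p(N)$.

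With a weak separation oracle in hand, the remaining step is routine. To solve WOPT for the objective $\ip{u}{x}$ and threshold $\beta$, I would run the central-cut ellipsoid algorithm starting from $\B_N(0,p(N))$, at each iteration either cutting with a hyperplane returned by the separation oracle (when the current center lies outside the body) or cutting with the objective constraint $\ip{u}{x} \le \ip{u}{c}$ for $c$ the current center (when the center is inside and a smaller objective value is sought). After a polynomial number of iterations the ellipsoid volume drops below the guaranteed volume of the inner ball; at that point either a feasible point of objective value at most $\beta$ has been recorded, giving a yes-answer, or no such point exists within the tolerance, giving a no-answer. Matching this to the $1/m$ promises of WOPT only requires running the algorithm at an accuracy a fixed polynomial factor finer than $1/m$.

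Finally, to obtain the \emph{strong} reduction I would track bit-sizes throughout. Because the initial ellipsoid has radius $p(N)$ and every ellipsoid update is a rational operation on the current data, all centers and query points produced over the polynomially many iterations have numerators and denominators of polynomially bounded bit-length, and rounding each query to a polynomial number of bits (as the weak tolerances permit) preserves this. Consequently, given a $p$-bounded instance of WOPT, every vector submitted to the WMEM oracle is $q$-bounded for a suitable polynomial $q$, which is exactly the strong-reduction requirement. The one delicate bookkeeping point is to confirm that the accuracy demanded of the separation oracle, when propagated through the ellipsoid analysis, still corresponds to only polynomially many bits of precision in the membership queries; this is where the polynomial relationship between the inner and outer radii enters a second time.
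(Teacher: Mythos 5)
First, a point of comparison: the paper does not actually prove this theorem. It is quoted from Liu's work \cite{Liu2007} as a black box --- the text explicitly declines to describe ``the specifics of the reduction itself or the proof that it is correct'' --- and the only argument the paper supplies in connection with the statement is the remark following it, which gives a strong mapping reduction from the $\norm{u}\leq 1$ formulation of $\textup{WOPT}(\K)$ to Liu's $\norm{u}=1$ formulation. So your proposal is an attempt to reprove Liu's theorem from scratch and must be judged on its own merits. Its outer architecture is the standard one (and is what underlies Liu's proof, via the Gr\"otschel--Lov\'asz--Schrijver framework): convert weak membership into weak separation, run the central-cut ellipsoid method with the polynomial boundedness supplying both the initial ellipsoid and the volume lower bound, and track bit-sizes to obtain strongness.

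However, there is a genuine gap at exactly the step you yourself flag as the crux: the conversion of a weak membership oracle into a weak separation oracle. Your sketch --- binary search along the segment from $0$ to $y$ to find a boundary point $z$, then estimate the outward normal by probing points near $z$ --- does not work as stated. A direction $c$ is a valid approximate separator only if $\ip{c}{x-z}\leq\varepsilon$ for \emph{every} $x\in\K_N$, which by convexity amounts to the same condition holding for every point of $\K_N\cap\B_N(z,\rho)$; but polynomially many membership probes near $z$ yield only polynomially many linear constraints on $c$, and in dimension $N$ the cone of directions satisfying those finitely many constraints contains directions that badly violate the true support condition (for instance, any $c$ orthogonal to all probe displacements satisfies them vacuously). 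Certifying that a candidate $c$ approximately supports $\K_N$ at $z$ is itself a separation/optimization problem over a convex body accessed only through membership queries --- the very problem being solved --- so the construction as described is circular. This membership-to-separation step is precisely the nontrivial content of the Yudin--Nemirovskii theorem (Gr\"otschel--Lov\'asz--Schrijver \cite{GroetschelLS1988}, Section 4.3), whose proof is substantially more involved than local normal estimation, and Liu's contribution was to carry such an argument through in the promise-problem, unary-precision setting while keeping every oracle query polynomially bounded. To repair the proposal you would either need to reproduce an argument of that kind or, as the paper does, invoke the known theorem and confine your work to the variant-matching issues (such as $\norm{u}\leq 1$ versus $\norm{u}=1$) that the paper's remark addresses.
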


\begin{remark}
  Liu actually proved this theorem for a slightly more restricted version of
  $\text{WOPT}(\K)$ in which the vector $u$ must satisfy $\norm{u} = 1$ rather
  than $\norm{u} \leq 1$.
  The benefit of adopting the definition with the inequality $\norm{u} \leq 1$
  rather than the equality $\norm{u} = 1$ is that it allows us to largely
  circumvent precision issues that arise when taking square roots of rational
  numbers.
  Fortunately, Liu's theorem still holds for the less restricted variant of
  $\text{WOPT}(\K)$, as it has been defined above, as there exists a strong
  polynomial-time mapping reduction from this problem to its more restricted
  variant, under the assumption that there exists a polynomial $p$ for which
  $\K_N\subseteq\B_N(0,p(N))$ for every positive integer $N$ (which, of course,
  is the case when $\K$ is polynomially bounded).

  We will now argue that this is so.
  Consider a reduction that transforms a given instance $(u,\gamma,0^m)$ of
  $\text{WOPT}(\K)$, where $u\in\real^N$ satisfies $\norm{u}\leq 1$, to an
  instance $(v,\delta,0^{4m})$ of the equality-restricted variant of
  $\text{WOPT}(\K)$, where $v$ and $\delta$ are as follows:
  \begin{enumerate}
  \item[1.]
    The vector $v\in\real^N$ is a unit vector satisfying
    \begin{equation}
      \biggnorm{ v - \frac{u}{\norm{u}}} \leq \frac{1}{4 m (p(N)+1)}.
    \end{equation}
  \item[2.]
    The number $\delta$ satisfies
    \begin{equation}
      \frac{\gamma}{\norm{u}} + \frac{1}{4m}
      \leq \delta \leq
      \frac{\gamma}{\norm{u}} + \frac{1}{2m}.
    \end{equation}
  \end{enumerate}

  In the case that $(u,\gamma,0^m)$ is a yes-instance of $\text{WOPT}(\K)$,
  one has that there exists a vector $x\in\real^N$ such that
  $\B_N(x,1/m)\subseteq\K_N$ and $\ip{u}{x} \leq \gamma$.
  The same vector $x$ trivially satisfies $\B_N(x,1/(4m))\subseteq\K_N$, as
  well as
  \begin{equation}
    \ip{v}{x} = \biggip{\frac{u}{\norm{u}}}{x}
    + \biggip{v - \frac{u}{\norm{u}}}{x}
    \leq \frac{\gamma}{\norm{u}} + \frac{\norm{x}}{4 m (p(N)+1)}
    \leq \delta.
  \end{equation}
  The instance $(v,\delta,0^{4m})$ is therefore a yes-instance of the
  equality-restricted variant of $\text{WOPT}(\K)$.

  In the case that $(u,\gamma,0^m)$ is a no-instance of $\text{WOPT}(\K)$,
  every vector $x\in\B_N(\K_N,1/m)$ satisfies $\ip{u}{x} \geq \gamma + 1/m$,
  and therefore also satisfies
  \begin{equation}
    \begin{multlined}
      \ip{v}{x}
      = \biggip{\frac{u}{\norm{u}}}{x}
      + \biggip{v-\frac{u}{\norm{u}}}{x}\\[1mm]
      \geq
      \frac{\gamma}{\norm{u}}+\frac{1}{m \norm{u}}-\frac{\norm{x}}{4m(p(N)+1)}
      \geq
      \frac{\gamma}{\norm{u}}+\frac{3}{4m} \geq \delta +
      \frac{1}{4m}.
    \end{multlined}
  \end{equation}
  Of course this is therefore true for all $x\in\B_N(\K_N,1/(4m))$, so
  $(v,\delta,0^{4m})$ is a no-instance of the equality-restricted variant of
  $\text{WOPT}(\K)$.

  Given $(u,\gamma,0^m)$, one can compute $(v,\delta,0^{4m})$ in polynomial
  time by performing the required arithmetic computations to inverse-polynomial
  accuracy.
  If it is the case that $\gamma$ and the entries of $u$ are given by ratios of
  integers that are bounded in absolute value by some polynomial in $N$, then
  the numerators and denominators of $\delta$ and the entries of $v$ will also
  be polynomially bounded in absolute value, and therefore this is a strong
  polynomial-time mapping reduction.  
\end{remark}

\subsection{Full-dimensional real convex sets for mixed-unitary optimization}

The \emph{weak optimization} and \emph{weak membership} problems are concerned
with convex subsets of $\real^N$, for different choices of $N$, and the
assumption that $\K = \{\K_1,\K_2,\ldots\}$ is a polynomially bounded
collection of compact, convex sets implies that these sets are
full-dimensional.
On the other hand, the \emph{unitary quadratic minimization} and
\emph{mixed-unitary detection} problems are concerned with complex operators,
and moreover (as will shortly become clear), these problems are most naturally
connected with affine subspaces of vector spaces that do not have full
dimension.
In this section we consider a particular family $\K = \{\K_1,\K_2,\ldots\}$
that will allow for a translation from \emph{unitary quadratic minimization} to
\emph{weak optimization} and from \emph{weak membership} to
\emph{mixed-unitary detection}.
It is also proved that $\K$ is polynomially bounded, so that Liu's reduction
holds for this choice of~$\K$.

To begin, for a given positive integer $n\geq 2$, consider the space of all
$n\times n$ traceless Hermitian operators, which is a real vector space of
dimension $n^2-1$.
We will require an orthogonal basis for this space, and one reasonable choice
for such a basis is given by the generalized Gell Mann operators.
Specifically, let $G_1,\ldots,G_{n^2-1}$ denote the elements of
$\Herm(\complex^n)$ obtained by taking the natural ordering suggested by the
following list:
\begin{enumerate}
\item[1.] The first $\binom{n}{2}$ of these operators are
  $E_{j,k} + E_{k,j}$ for $1\leq j < k \leq n$.
\item[2.] The next $\binom{n}{2}$ of these operators are
  $i E_{j,k} - i E_{k,j}$ for $1\leq j < k \leq n$.
\item[3.] The last $n-1$ of these operators are
  \begin{equation}
    \sum_{j = 1}^k E_{j,j} - k E_{k+1,k+1}
  \end{equation}
  for $k = 1,\ldots,n-1$.
\end{enumerate}
\noindent
It will be convenient later to make use of the observation that
$1\leq \norm{G_j}_2 \leq n$ for all $j\in\{1,\ldots,n^2-1\}$.

Let us now define $N = (n^2-1)^2$, which is to be viewed hereafter as a
function of whatever value of $n\geq 2$ is under consideration.
Let $H_1,\ldots,H_N\in\Herm(\complex^n\otimes\complex^n)$ be the operators
obtained by tensoring together the operators $G_1,\ldots,G_{n^2-1}$ in all
possible pairs:
\begin{equation}
  H_1 = G_1\otimes G_1, \quad
  H_2 = G_1\otimes G_2, \quad\ldots,\quad
  H_N = G_{n^2-1}\otimes G_{n^2-1}.
\end{equation}
The operators $H_1,\ldots,H_N$ represent an orthogonal basis for the real
vector space
\begin{equation}
  \V_n = \bigl\{ X\in\Herm(\complex^n\otimes\complex^n)\,:\,
  \bigl(\textup{Tr}\otimes\I_{\Lin(\complex^n)}\bigr)(X) = 0,\;
  \bigl(\I_{\Lin(\complex^n)}\otimes\textup{Tr}\bigr)(X) = 0\bigr\}.
\end{equation}
The relevance of this space is that the smallest real affine subspace of
$\Herm(\complex^n\otimes\complex^n)$ that contains $J(\Phi)$ for every
mixed-unitary channel of the form
$\Phi:\Lin(\complex^n)\rightarrow\Lin(\complex^n)$ is equal to
\begin{equation}
  \V_n + \frac{\I_n\otimes\I_n}{n}.
\end{equation}
Note that
$1\leq \norm{H_j}_2 \leq n^2 < N$ for every $j\in\{1,\ldots,N\}$.

Next, consider the affine linear mapping
$\varphi_n:\real^N \rightarrow \Herm(\complex^n\otimes\complex^n)$
given by
\begin{equation}
  \varphi_n(x) = x(1) H_1 + \cdots + x(N) H_N +
  \frac{\I_n\otimes\I_n}{n}.
\end{equation}
This function is one-to-one, and as $x\in\real^N$ ranges over all vectors,
$\varphi_n(x)$ ranges over the Choi representations of all trace-preserving,
unital, and Hermitian-preserving maps.

Finally, define
\begin{equation}
  \K_N = \bigl\{x\in\real^N\,:\,\varphi_n(x) = J(\Phi)\;\text{for
    $\Phi\in\Channel(\complex^n)$ mixed-unitary}\},
\end{equation}
define $\K_k = \B_k(0,1)$ for each positive integer $k$ that does not take the
form $(n^2 - 1)^2$ for an integer $n\geq 2$, and let
$\K = \{\K_1,\K_2,\ldots\}$.
The particular choice $\K_k = \B_k(0,1)$ when $k \not= (n^2 - 1)^2$ for any
integer $n\geq 2$ is not really important---it is just a trivial choice of
a set for each such dimension that will allow the reduction to work.
Each $\K_N$ is the preimage of the compact and convex set of mixed-unitary
channels under an affine linear map, from which it follows that $\K_N$ is also
compact and convex.
Of course $\K_k$ is trivially compact and convex when $k\not=(n^2-1)^2$ for
every integer $n\geq 2$.

To prove that $\K$ is polynomially bounded, suppose first that $x\in\K_N$ for
$N = (n^2-1)^2$, so that $\varphi_n(x) = J(\Phi)$ for
$\Phi:\Lin(\complex^n)\rightarrow\Lin(\complex^n)$ a mixed-unitary channel.
Because $\Phi$ is a channel, it is the case that
\begin{equation}
  \norm{J(\Phi)}_2 \leq \norm{J(\Phi)}_1 = \tr(J(\Phi)) = n;
\end{equation}
the inequality follows from the fact that $\norm{X}_2\leq\norm{X}_1$ for all
operators, the first equality follows from the fact that $J(\Phi)$ is positive
semidefinite whenever $\Phi$ is a channel, and the second equality follows from
the fact that $\Phi$ must preserve trace.
Because the operators $H_1,\ldots,H_N$ are orthogonal and traceless (and
therefore orthogonal to $\I_n\otimes\I_n$), we conclude that
\begin{equation}
  \norm{J(\Phi)}_2^2
  = \sum_{j = 1}^N x(j)^2 \norm{H_j}_2^2
  + \biggnorm{\frac{\I_n\otimes\I_n}{n}}_2^2
  = \sum_{j = 1}^N x(j)^2 \norm{H_j}_2^2 + 1.
\end{equation}
As $\norm{H_j}_2\geq 1$ for every $j\in\{1,\ldots,N\}$, it follows that
\begin{equation}
  \norm{x}^2 \leq \sum_{j = 1}^N x(j)^2 \norm{H_j}_2^2
  = \norm{J(\Phi)}_2^2 - 1 \leq n^2 - 1,
\end{equation}
and therefore $\norm{x} < n$.
It has therefore been proved that $\K_N \subseteq \B_N(0,n)$.
Of course, when $k\not=(n^2-1)^2$ for every integer $n\geq 2$, it trivially
holds that $\K_N \subseteq \B_N(0,1)$.

To prove that there exists a ball with inverse polynomial radius within
each set $\K_N$, we will make use of the following theorem, which was proved in
\cite{Watrous2009}.
\begin{theorem}
  \label{theorem:mixed-unitary-ball}
  Let $n$ be a positive integer and let
  $\Phi:\Lin(\complex^n)\rightarrow\Lin(\complex^n)$ be a trace-preserving,
  unital, and Hermitian-preserving map.
  If it is the case that
  \begin{equation}
    \biggnorm{J(\Phi) - \frac{\I_n \otimes \I_n}{n}} \leq
    \frac{1}{n(n^2 - 1)},
  \end{equation}
  then $\Phi$ is a mixed-unitary channel.
\end{theorem}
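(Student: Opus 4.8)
The plan is to phrase the statement entirely in terms of Choi operators and to show that the set of Choi representations of mixed-unitary channels contains a spectral-norm ball of the stated radius about its natural center. Writing $\Omega$ for the completely depolarizing channel $\Omega(X) = \tr(X)\I_n/n = \frac1{n^2}\sum_j W_j X W_j^{\ast}$ (the uniform mixture of the $n^2$ discrete Weyl unitaries), its Choi operator is exactly $J(\Omega) = \I_n\otimes\I_n/n =: C$, so the hypothesis says that $J(\Phi)$ lies within spectral distance $1/(n(n^2-1))$ of $C$. Since the Choi operator of $U(\cdot)U^{\ast}$ is the rank-one operator $\vec(U)\vec(U)^{\ast}$ (a scaled maximally entangled projection), the mixed-unitary channels are precisely those whose Choi operator lies in the compact convex set $S = \operatorname{conv}\{\vec(U)\vec(U)^{\ast} : U\in\Unitary(\complex^n)\}$, and $C$ is the Haar average of these rank-one operators. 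Thus it suffices to prove that $S$ contains every operator of the form $C + \Delta$ with $\Delta\in\V_n$ and $\norm{\Delta}\le 1/(n(n^2-1))$. As a useful consistency check, one verifies by the estimate $C + (n^2-1)\Delta\succeq (1/n - (n^2-1)\norm{\Delta})\I \succeq 0$ that such a $\Phi$ can be written as $\tfrac{1}{n^2-1}\Phi' + \tfrac{n^2-2}{n^2-1}\Omega$ for a genuine unital channel $\Phi'$, so the claim is the assertion that the depolarizing channel absorbs an inverse-polynomial amount of any unital channel.

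Second, I would convert this ball-containment into a single family of scalar inequalities by convex duality. Because $S$ is compact and convex and lies in the affine hull $C + \V_n$, it suffices to establish the support-function bound $\max_{Z\in S}\ip{Y}{Z}\ge \ip{Y}{C} + r\norm{Y}_1$ for every $Y\in\V_n$, where I use that the norm dual to the spectral norm is the trace norm. Since every $Y\in\V_n$ is traceless, $\ip{Y}{C}=0$, and since the maximum of a linear functional over a convex hull is attained at an extreme point, the target inequality becomes
\[
  \max_{U\in\Unitary(\complex^n)} \vec(U)^{\ast}\, Y\, \vec(U) \;\ge\; \frac{\norm{Y}_1}{n(n^2-1)} \qquad\text{for every } Y\in\V_n .
\]
Here the quantity $\vec(U)^{\ast} Y\vec(U) = \tr\bigl(U^{\ast}\Lambda(U)\bigr)$ is a unitary-quadratic form for a suitable reshuffling $\Lambda$ of $Y$, and the two partial-trace constraints defining $\V_n$ are exactly what force its Haar average to vanish.

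Third, to produce the lower bound I would pass to the second Haar moment. The mean of $f_Y(U) := \vec(U)^{\ast}Y\vec(U)$ is zero, and $\int f_Y(U)^2\,dU$ can be evaluated using the $U\otimes U$ twirl, whose commutant is spanned by $\I$ and the swap operator; the denominators $n^2-1$ produced by this $2$-design computation are precisely the source of the factor $n^2-1$ in the claimed radius. Together with the crude range bound $\abs{f_Y(U)}\le n\norm{Y}$ and the elementary inequality $\norm{Y}_1\le n\norm{Y}_2$, the variance estimate is then meant to be turned into a lower bound on $\max_U f_Y$ and matched against the target constant $1/(n(n^2-1))$.

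The hard part is the last step, and it is genuinely a one-sided estimate rather than a statement about spread. The set $S$ is not symmetric about $C$ --- indeed the point antipodal to a maximally entangled projection through $C$ fails even to be positive semidefinite --- so one cannot deduce the required bound on $\max_U f_Y$ merely from $\int f_Y^2\,dU$, which only controls the sum $\max_U f_Y + \max_U(-f_Y)$. Closing this gap in every direction $Y$ simultaneously, and thereby pinning down the exact constant $1/(n(n^2-1))$, is where the real work lies: I expect it to require either an explicit near-optimal unitary built from the spectral data of the reshuffled operator $\Lambda$, so that $f_Y(U)$ is forced to be positively large, or a quantitative control on the skewness of $f_Y$ that rules out the maximum being much smaller than the standard deviation.
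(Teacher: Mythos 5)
Your proposal should first be measured against what the paper actually does: the paper does not prove this theorem at all --- it imports it from \cite{Watrous2009} (``which was proved in [27]'') and uses it as a black box to show the sets $\K_N$ contain inverse-polynomially large balls. So the only question is whether your argument stands on its own, and as submitted it does not, by your own admission. Your reductions are correct: with $C = \I_n\otimes\I_n/n$ and $S$ the convex hull of the operators $\vec(U)\vec(U)^{\ast}$, separation inside the affine hull $C+\V_n$, H\"older duality between the spectral and trace norms, and the fact that a linear functional on $S$ is maximized at some $\vec(U)\vec(U)^{\ast}$ reduce the theorem to the one-sided estimate
\begin{equation}
  \max_{U\in\Unitary(\complex^n)} f_Y(U) \,\geq\, \frac{\norm{Y}_1}{n(n^2-1)},
  \qquad f_Y(U) = \vec(U)^{\ast}\, Y \vec(U),
\end{equation}
for every $Y\in\V_n$. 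But this estimate \emph{is} the theorem; everything preceding it is routine repackaging, and your closing paragraph explicitly defers it (``where the real work lies''). As a proof, the proposal therefore has a genuine, self-acknowledged gap located exactly at the quantitative core.

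That said, your diagnosis of the obstruction is too pessimistic: the three ingredients you already list --- Haar mean zero, the $2$-design variance, and the crude range bound --- do combine into a one-sided bound. For any mean-zero random variable $f$ satisfying $f\geq -M$ pointwise, integrating the pointwise inequality $(\max f - f)(f+M)\geq 0$ against Haar measure gives $\max f \geq \sigma^2/M$; the asymmetry of $S$ about $C$ is no obstacle once the lower range bound is used, since that bound is precisely the one-sided information that the variance alone lacks. Carrying out the second-moment computation you sketch, the two partial-trace constraints defining $\V_n$ kill every Weingarten cross term, leaving exactly $\sigma^2 = \int f_Y^2\,dU = \norm{Y}_2^2/(n^2-1)$, while $M = n\norm{Y}$. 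Hence
\begin{equation}
  \max_{U} f_Y(U) \geq \frac{\norm{Y}_2^2}{n(n^2-1)\norm{Y}}
  \geq \frac{\norm{Y}_1}{n^2(n^2-1)},
\end{equation}
the last inequality using $\norm{Y}\leq\norm{Y}_2$ and $\norm{Y}_1\leq n\norm{Y}_2$ (the rank is at most $n^2$). This closes your argument, but only with radius $1/(n^2(n^2-1))$ in place of the stated $1/(n(n^2-1))$: still inverse-polynomial, hence adequate for every use this paper makes of the theorem after trivially enlarging polynomial factors elsewhere, but a factor of $n$ short of the constant you set out to prove. Recovering the exact constant would require genuinely sharper, direction-dependent information of the kind you speculate about, and neither your proposal nor this repair supplies it.
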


\noindent
For an arbitrary choice of $x\in\real^N$, the mapping $\Phi$ given by
$J(\Phi) = \varphi_n(x)$ satisfies
\begin{equation}
  \begin{multlined}
    \biggnorm{J(\Phi) - \frac{\I_n\otimes\I_n}{n}}^2
    \leq \biggnorm{J(\Phi) - \frac{\I_n\otimes\I_n}{n}}_2^2 \\
    = \sum_{j=1}^N x(j)^2 \norm{H_j}_2^2
    \leq n^4 \sum_{j=1}^N x(j)^2 = n^4 \norm{x}^2.
  \end{multlined}
\end{equation}
Therefore, if
\begin{equation}
  \norm{x} \leq \frac{1}{n^3(n^2-1)},
\end{equation}
then
\begin{equation}
  \biggnorm{J(\Phi) - \frac{\I_n\otimes\I_n}{n}}
  \leq \frac{1}{n(n^2 - 1)},
\end{equation}
so $\Phi$ is mixed-unitary by Theorem~\ref{theorem:mixed-unitary-ball}.
As $N^2 > n^3(n^2 - 1)$, we conclude that
\begin{equation}
  \B_N(0,1/N^2) \subseteq \K_N.
\end{equation}
When $k\not=(n^2-1)^2$ for every integer $n\geq 2$, it trivially holds that
$\B_N(0,1) \subseteq \K_N$.

In conclusion, for all positive integers $k$, it is the case that
\begin{equation}
  \B_k(0,1/k^2) \subseteq \K_k \subseteq \B_k(0,k^2),
\end{equation}
and therefore $\K$ is polynomially bounded.
By Theorem~\ref{theorem:Liu} it therefore follows that
\begin{equation}
  \text{WOPT}(\K) \leq_T^p \text{WMEM}(\K)
\end{equation}
for this choice of $\K$.

\subsection{From unitary quadratic minimization to weak optimization}

In order to prove that $\textup{UQM} \leq_T^p \textup{MUD}$, we will
establish the following chain of reductions:
\begin{equation}
  \label{eq:reduction-chain}
  \text{UQM} \leq_m^p \text{WOPT}(\K) \leq_T^p \text{WMEM}(\K) \leq_m^p
  \text{MUD}.
\end{equation}
The Turing reduction has already been established in the previous subsection,
and in the current subsection we will prove that the first mapping reduction
holds.

To this end, consider an arbitrary instance
\begin{equation}
  (A_1,\ldots,A_k,\alpha,0^m)
\end{equation}
of $\text{UQM}$.
We will first describe the instance of $\text{WOPT}(\K)$ to which each such
instance of $\text{UQM}$ maps, and then we will argue the correctness of the
reduction.

\begin{enumerate}
\item[1.]
  Define an operator $P\in\Herm(\complex^n\otimes\complex^n)$,
  vectors $w,v\in\real^N$, and a real number $\gamma\in\real$ as follows:
  \begin{equation}
    \begin{aligned}
      P & = \sum_{j = 1}^k \vec(A_j) \vec(A_j)^{\ast},\\
      w & = \bigl(\ip{P}{H_1}, \ldots, \ip{P}{H_N}\bigr),\\
      v & = \frac{w}{k N^2},\\
      \gamma & = \alpha - \frac{\tr(P)}{knN^2} + \frac{1}{2kmN^2}.
    \end{aligned}
  \end{equation}
  (The vec mapping refers to the vectorization of an operator
  $A\in\Lin(\complex^n)$:
  \begin{equation}
    \op{vec}(A) = \sum_{1\leq i,j\leq n} A(i,j) e_i\otimes e_j,
  \end{equation}
  which is equivalent to taking the rows of the matrix representation of $A$,
  transposing them to obtain column vectors, and then stacking these column
  vectors on top of one another to form a single vector.)
    
\item[2.]
  Define $u\in\real^N$ as
  \begin{equation}
    u(j) = \frac{\op{trunc}\bigl(8kmnN^3 v(j)\bigr)}{8kmnN^3}
  \end{equation}
  for each $j\in\{1,\ldots,N\}$ and define
  \begin{equation}
    \beta = \frac{\op{trunc}(8kmN^2 \gamma)}{8kmN^2}.
  \end{equation}
  (The truncation function is defined as
  $\text{trunc}(\theta) = \lfloor \theta \rfloor$ and
  $\text{trunc}(-\theta) = - \lfloor \theta \rfloor$ for $\theta\geq 0$,
  so that it always rounds toward zero.)
  
\item[3.]
  The output of the reduction is $(u,\beta,0^r)$ for $r = 8 k m n^4 N^3$.
\end{enumerate}

\noindent
It is evident that $(u,\beta,0^r)$ is polynomial-time computable from
$(A_1,\ldots,A_k,\alpha,0^m)$.
Moreover, under the assumption that $\alpha$ is upper-bounded by a polynomial,
the number $\beta$ and the entries of $u$ can be expressed as ratios of
polynomially bounded integers.
The reduction is therefore a strong polynomial-time mapping reduction.

It remains to argue that if $(A_1,\ldots,A_k,\alpha,0^m)$ is a yes-instance of
$\text{UQM}$ then $(u,\beta,0^r)$ is a yes-instance of $\text{WOPT}(\K)$, and
if $(A_1,\ldots,A_k,\alpha,0^m)$ is a no-instance of $\text{UQM}$ then
$(u,\beta,0^r)$ is a no-instance of $\text{WOPT}(\K)$.
First we note, by the assumption that $\norm{A_j}_2\leq 1$ for each
$j\in\{1,\ldots,k\}$, that $\tr(P) \leq k$.
The norm of $w$ may therefore be upper-bounded,
\begin{equation}
  \norm{w} = \Biggl(\sum_{j=1}^N \abs{\ip{P}{H_j}}^2\Biggr)^{\frac{1}{2}}
  \leq \Biggl(\sum_{j=1}^N \norm{P}_1^2 \norm{H_j}^2\Biggr)^{\frac{1}{2}}
  \leq \sqrt{k^2 N^3} < k N^2,
\end{equation}
which implies $\norm{v} \leq 1$.
As the entries of $u$ are obtained from $v$ by truncations, it is therefore
clear that $\norm{u} \leq 1$.

Next, observe that
\begin{equation}
  \bigip{P}{\vec(U) \vec(U)^{\ast}} = \sum_{j=1}^k \bigabs{\bigip{A_j}{U}}^2
\end{equation}
for every unitary operator $U\in \Unitary(\complex^n)$.
It is evident that
\begin{equation}
  \min_{\Phi\in\MixedUnitary(\complex^n)} \bigip{P}{J(\Phi)} =
  \min_{U\in\Unitary(\complex^n)} \bigip{P}{\vec(U) \vec(U)^{\ast}},
\end{equation}
by virtue of the fact that the function $J(\Phi)\mapsto \ip{P}{J(\Phi)}$ is
linear, the unitary channels are the extreme points of the set of
mixed-unitary channels, and the Choi representation of a unitary channel
$\Phi(X) = U X U^{\ast}$ is
\begin{equation}
  J(\Phi) = \vec(U) \vec(U)^{\ast}.
\end{equation}
It is also the case that
\begin{equation}
  \min_{x\in\K_N} \bigip{P}{\varphi_n(x)} =
  \min_{\Phi\in\MixedUnitary(\complex^n)} \bigip{P}{J(\Phi)}
\end{equation}
because $\varphi_n(x)$ ranges over the set $\MixedUnitary(\complex^n)$ as $x$
ranges over $\K_N$.
Finally, for any choice of a vector $x\in\real^N$, it is the case that
\begin{equation}
  \bigip{P}{\varphi_n(x)} =
  x(1) \ip{P}{H_1} + \cdots + x(N) \ip{P}{H_N} + \frac{\tr(P)}{n}
  = \ip{w}{x} + \frac{\tr(P)}{n}.
\end{equation}
Altogether, this implies that
\begin{equation}
  \min_{x\in\K_N} \ip{v}{x} = 
  \frac{1}{kN^2} \min_{U\in\Unitary(\complex^n)}
  \sum_{j=1}^k \bigabs{\bigip{A_j}{U}}^2
  - \frac{\tr(P)}{knN^2}.
\end{equation}
If $(A_1,\ldots,A_k,\alpha,0^m)$ is a yes-instance of $\text{UQM}$, then it
follows that
\begin{equation}
  \min_{x\in\K_N} \ip{v}{x} \leq \frac{\alpha}{kN^2} - \frac{\tr(P)}{knN^2}
  = \gamma - \frac{1}{2kmN^2},
\end{equation}
while if $(A_1,\ldots,A_k,\alpha,0^m)$ is a no-instance of $\text{UQM}$, then
\begin{equation}
  \min_{x\in\K_N} \ip{v}{x} \geq \frac{\alpha}{kN^2} - \frac{\tr(P)}{knN^2}
  + \frac{1}{k m N^2} = \gamma + \frac{1}{2kmN^2}.
\end{equation}

We now turn to the vector $u$ and the real number $\beta$, which may be viewed
as approximations of $v$ and $\gamma$, respectively.
In particular,
\begin{equation}
  \norm{u-v} \leq \norm{u-v}_1 \leq \frac{1}{8kmnN^2}
\end{equation}
and 
\begin{equation}
  \abs{\beta - \gamma} \leq \frac{1}{8kmN^2}.
\end{equation}
We have already proved that $\norm{x}\leq n$ for every $x\in\K_N$, and this
implies that
\begin{equation}
  \bigabs{\ip{u}{x} - \ip{v}{x}} \leq \frac{1}{8kmN^2}
\end{equation}
for all $x\in\K_N$.
We conclude that
\begin{equation}
  \min_{x\in\K_N} \ip{u}{x} \leq \beta - \frac{1}{4kmN^2}
  \quad\text{or}\quad
  \min_{x\in\K_N} \ip{u}{x} \geq \beta + \frac{1}{4kmN^2},
\end{equation}
depending on whether $(A_1,\ldots,A_k,\alpha,0^m)$ is a yes- or no-instance of
$\text{UQM}$, respectively.

Now, observe that for every $x\in\K_N$ and $\varepsilon \in [0,1]$, it is
the case that
\begin{equation}
  \B_N\biggl((1-\varepsilon)x,\frac{\varepsilon}{n(n^2-1)N}\biggr)
  \subseteq\K_N.
\end{equation}
This is a consequence of Theorem~\ref{theorem:mixed-unitary-ball}, for if
$z\in\real^N$ satisfies
\begin{equation}
  \norm{z} \leq \frac{1}{n(n^2-1)N},
\end{equation}
then
\begin{equation}
  \biggnorm{\varphi_n(z) - \frac{\I_n\otimes\I_n}{n}}
  \leq 
  \biggnorm{\varphi_n(z) - \frac{\I_n\otimes\I_n}{n}}_2
  \leq N\, \norm{z}
  \leq 
  \frac{1}{n(n^2-1)},
\end{equation}
so that $\varphi_n(z) \in \text{MU}(\complex^n)$, and therefore
$(1-\varepsilon)\varphi_n(x) + \varepsilon \varphi_n(z)$ is a convex
combination of mixed-unitary channels.
In particular, for
\begin{equation}
  \varepsilon = \frac{1}{8kmnN^2}
\end{equation}
we have that
\begin{equation}
  \B_N\biggl((1-\varepsilon)x,\frac{1}{r}\biggr) \subseteq\K_N
\end{equation}
for every $x\in\K_N$.
If $(A_1,\ldots,A_k,\alpha,0^m)$ is a yes-instance of $\text{UQM}$, then
there must exist $x\in\K_N$ so that
\begin{equation}
  \ip{u}{x} \leq \beta - \frac{1}{4kmN^2}.
\end{equation}
As
\begin{equation}
  \bigabs{\ip{u}{x} - \ip{u}{(1-\varepsilon)x}} = \varepsilon \abs{\ip{u}{x}}
  < \varepsilon n
\end{equation}
it is the case that
\begin{equation}
  \ip{u}{(1-\varepsilon)x} \leq \beta - \frac{1}{8kmN^2}.
\end{equation}
This implies that $(u,\beta,0^r)$ is a yes-instance of $\text{WOPT}(\K)$.

Finally, if $(A_1,\ldots,A_k,\alpha,0^m)$ is a no-instance of $\text{UQM}$,
then
\begin{equation}
  \ip{u}{x} \geq \beta + \frac{1}{4kmN^2}
\end{equation}
for every $x\in\K_N$.
For every $x\in\B_N(\K_N,1/r)$ we therefore have
\begin{equation}
  \ip{u}{x} \geq \beta + \frac{1}{4kmN^2} - \frac{1}{r}
  \geq \beta + \frac{1}{8kmN^2},
\end{equation}
where the first inequality makes use of the fact that $\norm{u}\leq 1$.
This implies that $(u,\beta,0^r)$ is a no-instance of $\text{WOPT}(\K)$, and
therefore completes the proof that $\text{UQM}\leq_m^p\text{WOPT}(\K)$.

\subsection{From weak membership to mixed-unitary detection}

The remaining reduction in the chain \eqref{eq:reduction-chain} is the
reduction $\textup{WMEM}(\K)\leq_m^p\textup{MUD}$, which we prove in this
subsection.

Before describing the reduction, it will be helpful to observe the
following fact:
if $N = (n^2-1)^2$ for some integer $n\geq 2$, and $y,z\in\real^N$ are
arbitrary vectors, then for the mappings $\Psi$ and $\Xi$ defined by
$J(\Psi) = \varphi_n(y)$ and $J(\Xi) = \varphi_n(z)$, it is the case that
\begin{equation}
  \label{eq:2-norm-bound-on-Psi}
  \norm{J(\Psi) - J(\Xi)}_2^2
  = \sum_{j = 1}^N (y(j) - z(j))^2 \norm{H_j}_2^2 \geq \norm{y-z}^2,
\end{equation}
by virtue of the fact that $\norm{H_j}_2\geq 1$ for every
$j\in\{1,\ldots,N\}$.

Now consider an arbitrary instance $(x,0^m)$ of $\textup{WMEM}(\K)$.
There are two cases to be considered, the first of which is that $x\in\real^N$
for $N = (n^2-1)^2$, where $n\geq 2$ is an integer.
In this case, the first step of the reduction is to compute a vector
$z\in\real^N$ as follows:
\begin{equation}
  z(j) = \frac{\op{trunc}\bigl(2mN x(j)\bigr)}{2mN}.
\end{equation}
The vector $z$ satisfies $\norm{x-z}\leq 1/(2m)$, and is such that every entry
shares the same denominator $2mN$.
(This property will be needed to guarantee that the reduction is strong, and
specifically to avoid a situation in which a polynomial number of rational
numbers, each of which has a polynomially bounded denominator, have an
exponentially large least common denominator.)
The second and final step of the reduction is to output the instance
\begin{equation}
  \bigl(\varphi_n(z),0^{2m}\bigr)
\end{equation}
of $\text{MUD}$.

To prove that this reduction operates correctly for the case being considered,
let $\Phi$ and $\Xi$ be the maps defined by $J(\Phi) = \varphi_n(x)$ and
$J(\Xi) = \varphi_n(z)$, let $\Psi$ be any unital, trace-preserving, and
Hermitian-preserving map that satisfies
\begin{equation}
  \label{eq:Phi-and-Psi-close}
  \norm{J(\Psi) - J(\Xi)}_2 \leq \frac{1}{2m},
\end{equation}
and let $y\in\real^N$ be the unique vector satisfying
$J(\Psi) = \varphi_n(y)$.
By \eqref{eq:2-norm-bound-on-Psi} we have \mbox{$\norm{y-z}\leq 1/(2m)$},
and therefore $\norm{x - y}\leq 1/m$.
If $(x,0^m)$ is a yes-instance of $\textup{WMEM}(\K)$, we therefore have that
$y\in\K_N$, which implies that $\Psi$ is a mixed-unitary channel, and hence
$(J(\Xi),0^{2m}) = (\varphi_n(z),0^{2m})$ is a yes-instance of $\text{MUD}$.
Similarly, if $(x,0^m)$ is a no-instance of $\textup{WMEM}(\K)$, we have that 
$y\not\in\K_N$, which implies that $\Psi$ is not mixed-unitary, and hence
$(J(\Xi),0^{2m}) = (\varphi_n(z),0^{2m})$ is a no-instance of $\text{MUD}$.

In the case that $(x,0^m)$ is an instance of $\textup{WMEM}(\K)$ for which
$x\in\real^k$ for $k$ a positive integer that is not of the form $(n^2 - 1)^2$
for some choice of an integer $n\geq 2$, it is straightforward to decide
whether $(x,0^m)$ is a yes-input or no-input by simply computing the norm of
$x$ numerically to additive error strictly less than $1/m$, then comparing the
result to 1.
In case $(x,0^m)$ is a yes-instance of $\textup{WMEM}(\K)$, the reduction
may output any fixed yes-instance of $\textup{MUD}$, and if $(x,0^m)$ is a
no-instance of $\textup{WMEM}(\K)$, then the reduction may output any fixed
no-instance of $\textup{MUD}$.
It has therefore been been proved that $\textup{WMEM}(\K)\leq_m^p\textup{MUD}$,
which completes the proof that $\text{UQM}\leq_T^p\text{MUD}$.

\section{Conclusion}
\label{sec:conclusion}

We have proved that it is strongly NP-hard, with respect to polynomial-time
Turing reductions, to determine if a given quantum channel is mixed-unitary,
promised that the given channel is not within an inverse-polynomial distance to
the boundary of the set of mixed-unitary channels.
We conclude with a few open problems and directions for future research
relating to this result.
\begin{enumerate}
\item[1.]
  As was suggested in the introduction, an operator
  $X\in\Lin(\complex^n\otimes\complex^n)$ satisfies $X = J(\Phi)$ for a
  mixed-unitary channel $\Phi:\Lin(\complex^n)\rightarrow\Lin(\complex^n)$
  if and only if $X = n \rho$ for $\rho$ being a bipartite quantum state of two
  $n$-dimensional systems that can be expressed as a convex combination of
  maximally entangled pure states:
  \begin{equation}
    \rho = \sum_{k=1}^N p_k u_k u_k^{\ast},
  \end{equation}
  where $u_1,\ldots,u_N\in\complex^n\otimes\complex^n$ satisfy
  \begin{equation}
    \bigl(\textup{Tr}\otimes\I_{\Lin(\complex^n)}\bigr)(u_k u_k^{\ast})
    = \I_n
  \end{equation}
  for each $k\in\{1,\ldots,N\}$.
  Our main result therefore establishes that it is NP-hard to determine whether
  or not a given bipartite quantum state can be expressed as a convex
  combination of maximally entangled pure states.
  
  It is also NP-hard to determine whether or not a given bipartite quantum
  state can be expressed as a convex combination of unentangled pure states
  (i.e., is a separable state)
  \cite{Gurvits2003,Ioannou2007,Gharibian2010,Yu2016},
  and it is interesting that these two extremes represent NP-hard decision
  problems.
  The computational hardness of detecting membership in a variety of other
  convex sets of bipartite (or multipartite) quantum states may also be
  considered.

\item[2.]
  What is the computational difficulty of deciding if a given channel is
  mixed-unitary, given more restrictive promises on the channel's distance from
  the boundary of the set of mixed-unitary channels?
  For example, one may consider the problem in which a given channel is
  promised either to be mixed-unitary or to be at an inverse logarithmic (or
  even constant) distance from the boundary of the mixed-unitary channels.
  We note that the analogous problem for separable states is also open.

  We also note that the two distance promises (inverse logarithmic and
  constant) just suggested are sensitive to the specific choice of a distance
  measure from the boundary of the mixed-unitary channels, in comparison to the
  inverse polynomial distance case we have studied in this paper.
  A variety of distance measures between channels, including the trace-norm,
  2-norm, and spectral-norm distances between their Choi representations, as
  well as the completely bounded trace norm (or diamond norm) distance between
  channels, are all equivalent to one another within a polynomial factor, but
  not within a logarithmic or constant factor.

\item[3.]
  Is the \emph{mixed-unitary detection} problem NP-hard with respect to
  polynomial-time mapping reductions?
  Similar to the previous problem, the analogous problem for separable states is
  also open.
    
\end{enumerate}

\subsection*{Acknowledgments}

This research was supported by Canada's NSERC and the Canadian Institute for
Advanced Research.

\bibliographystyle{halpha}
\bibliography{MixedUnitaryHard}

\end{document}